\newtheorem{theorem}{Theorem}[section]
\newtheorem{remark}{Remark}
\newtheorem{proposition}{Proposition}
\newcommand{\Exp}{\operatorname{Exp}}
\newcommand{\Log}{\operatorname{Log}}
\title{\LARGE \bf
Approximate Log-linear Dynamics for Thrusting Spacecraft on $\mathrm{SE}_2(3)$
}
\title{Approximate Log-linear Dynamics for Thrusting Spacecraft on $\mathrm{SE}_2(3)$}
\author{Micah K.~Condie\footnote{Graduate Research Assistant, School of Aeronautics and Astronautics, 701 W. Stadium Ave., West Lafayette, IN 47907, USA. Email: \texttt{condiem@purdue.edu}.}
and Abigaile E.~Woodbury\footnote{Graduate Research Assistant, School of Aeronautics and Astronautics, 701 W. Stadium Ave., West Lafayette, IN 47907, USA. Email:  \texttt{awoodbu@purdue.edu}.}
and Li-Yu~Lin\footnote{Graduate Research Assistant, School of Aeronautics and Astronautics, 701 W. Stadium Ave., West Lafayette, IN 47907, USA. Email: \texttt{lin1191@purdue.edu}.}
and Kartik A.~Pant\footnote{Graduate Research Assistant, School of Aeronautics and Astronautics, 701 W. Stadium Ave., West Lafayette, IN 47907, USA. Email:  \texttt{kpant@purdue.edu}.}
and James~Goppert\footnote{Research Assistant Professor, School of Aeronautics and Astronautics, 701 W. Stadium Ave., West Lafayette, IN 47907, USA. Email:\texttt{jgoppert@purdue.edu}.}}
\affil{Purdue University, West Lafayette, IN 47907, USA}
\author{Michael~Walker\footnote{Senior Technical Director, Cromulence, Melbourne, FL, USA. Email: mike.walker@cromulence.com}}
\affil{Cromulence, Melbourne, FL, USA}
\begin{document}

\maketitle

\begin{abstract}
We demonstrate that the error dynamics of a thrusting spacecraft are nearly group affine on the $SE_2(3)$ Lie group, and the nonlinearity can be bounded, or removed with the application of a dynamic inversion control law. A numerical example validates the results by showing agreement between the error predicted by the log-dynamics and the error obtained from classical integration of trajectories using Newtonian dynamics. The result clarifies how thrusting spacecraft dynamics fit within the invariant systems framework.
\end{abstract}

\section{Introduction}
The rapid growth of commercial and government satellite constellations has dramatically increased the number of orbital maneuvers required each year. Mega-constellations such as \textit{Starlink}, \textit{OneWeb}, and \textit{Kuiper} are projected to drive a fivefold increase in active spacecraft by the early 2030s \cite{esa2025,NASA_ODQN_29_2_2025}. As maneuver frequency rises, ensuring safe and fuel-efficient control becomes critical for extending satellite lifetime and preventing loss of mission. Modern spacecraft therefore increasingly rely on precise, continuous-thrust maneuvers for guidance, navigation, and rendezvous operations.

Classical linearized models such as the Hill–Clohessy–Wiltshire (HCW) equations and Tschauner–Hempel/ Yamanaka–Ankersen (TH/YA) equations \cite{ClohessyWiltshire1960,Yamanaka2002,SullivanGrimbergDamico2017} remain the foundation of relative-motion analysis. These models are derived via local linearization about a reference orbit and provide accurate descriptions only within a limited neighborhood of that trajectory. In particular, during powered spacecraft maneuvers, translational acceleration is commanded in the body frame and rotated into the inertial frame by the vehicle attitude, introducing nonlinear coupling between translation and rotation that is not explicitly reflected in these linearized models. As a result, the accuracy and validity of classical formulations can degrade during long-duration or attitude-coupled thrusting maneuvers \cite{Alfriend2009,Pavanello2025}.

This technical note takes a first step toward a geometric description of powered spacecraft maneuvers by showing that the disturbance-free dynamics of a thrusting spacecraft can be embedded in the Lie group $\mathrm{SE}_2(3)$, also known as the group of \emph{double direct spatial isometries}~\cite{Barrau2015}, in a nearly mixed-invariant form that is approximately group-affine. Under this representation, the left-invariant configuration-tracking error evolves independently of the system trajectory and admits a linear representation in the associated Lie algebra, rather than arising from a local linearization as in HCW or TH/YA. This property follows from results in invariant systems theory and provides a structurally consistent way to represent tracking error for thrusting spacecraft. 

This paper makes three contributions. 
First, it shows the translational and rotational dynamics of a thrusting spacecraft as a mixed-invariant system on the Lie group $\mathrm{SE}_2(3)$, identifying gravity as the sole obstruction to group-affinity~\cite{Barrau2017}. 
Second, it derives a bound on the gravity-induced nonlinearity showing that, for bounded tracking errors and orbital radii above a minimum altitude, the gravity mismatch acts as a perturbation that can be bounded, enabling standard linear stability and reachability tools without requiring gravity compensation. 
Third, it presents a dynamic inversion control law that eliminates the gravity mismatch entirely, yielding exactly log-linear closed-loop error dynamics independent of the reference trajectory.

The remainder of the paper is organized as follows. Section II reviews the Lie-group background required for the analysis. Section III presents the mixed-invariant spacecraft dynamics and derives the log-linear tracking error system. Section IV validates the result numerically, and Section V concludes the paper.

\section{Background}
This section reviews the Lie-group concepts required for the analysis. The problem formulation is presented separately in Section III.

\subsection{SE$_2(3)$ Lie Group Representation}

A Lie group $G$ provides a globally valid configuration space with an associated Lie algebra $\mathfrak g$, defined as the tangent space at the identity. The exponential and logarithmic maps,
\[
\Exp: \mathfrak g \!\to\! G, \qquad
\Log: G \!\to\! \mathfrak g,
\]
relate algebra elements to finite group motions.

Rigid-body configuration spaces such as $\mathrm{SO}(3)$, the group of three-dimensional rotation matrices, and $\mathrm{SE}(3)$, the group of rigid-body poses consisting of a rotation and a translation, are smooth nonlinear manifolds rather than vector spaces. As a result, Euclidean subtraction does not correspond to a physically meaningful displacement or rotation. The Lie group--Lie algebra framework provides a geometrically consistent way to represent and manipulate rigid-body motion.

In this work, we use the Lie group $\mathrm{SE}_2(3)$, which extends $\mathrm{SE}(3)$ by including translational velocity as part of the group element. A group element $X \in \mathrm{SE}_2(3)$ is written as
\begin{equation}
\label{eq:SE23}
X =
\begin{bmatrix}
R & v & p\\
0_{1 \times 3} & 1 & 0\\
0_{1 \times 3} & 0 & 1
\end{bmatrix},
\end{equation}
where $R \in SO(3)$ is the body-to-inertial attitude matrix, $v \in \mathbb{R}^3$ is the inertial-frame translational velocity, and $p \in \mathbb{R}^3$ is the inertial-frame position.

The associated Lie algebra $\mathfrak{se}_2(3)$ is represented using the wedge operator $(\cdot)^\wedge$. For algebra coordinates
\begin{equation}
    \xi = (\xi_p, \xi_v, \xi_R) \in \mathbb{R}^9,
\end{equation}
where $\xi_p$, $\xi_v$, $\xi_R$ correspond to position, velocity, and attitude error coordinates respectively, the wedge map is given by
\begin{equation}
\xi^\wedge =
\begin{bmatrix} \label{eq:elem-def}
[\xi_R]_\times & \xi_v & \xi_p\\
0 & 0 & 0\\
0 & 0 & 0
\end{bmatrix} \in \mathfrak{se}_2(3),
\end{equation}
where $[\cdot]_\times$ denotes the standard skew-symmetric matrix. The vee operator $(\cdot)^\vee$ extracts the vector $\xi$ from a matrix in $\mathfrak{se}_2(3)$.

A group element is obtained from algebra coordinates via the exponential map $X = Exp(\xi^\wedge)$.

\subsection{Attitude Kinematics and Invariant Error}
The spacecraft attitude is represented by $R \in SO(3)$, which maps vectors from the body frame to the inertial frame. Its kinematics are given by 
\begin{equation}
    \dot R = R[\omega]_\times,
\end{equation}
where $\omega \in \mathbb{R}^3$ is the body-frame angular velocity.

Let $X(p,v,R)$ and $\bar{X}(\bar{p},\bar{v},\bar{R})$ denote the actual and reference spacecraft states embedded in $SE_2(3)$. Two natural choices of configuration-tracking error exist: the left-invariant error $\eta = X^{-1} \bar{X}$ and the right-invariant error $\eta_R = \bar{X} X^{-1}$. The left-invariant error expresses the tracking error in the spacecraft's body frame, while the right-invariant error expresses the error in the inertial frame. Either choice yields a valid error representation with analogous properties; we adopt the left-invariant convention throughout this work, defining
\begin{equation}
    \eta = X^{-1} \bar{X} \in SE_2(3),
\end{equation}
which is consistent with standard formulations of invariant systems on $SE_2(3)$ and aligns with body-frame control implementations where thrust and angular velocity commands are naturally specified in the actual spacecraft's frame.

The corresponding logarithmic error is defined as
\begin{equation}
    \xi = \Log(\eta)^\vee \in \mathbb{R}^9.
\end{equation}
When the system dynamics satisfy the group-affine property, the resulting left-invariant tracking error evolves independently of the system trajectory and admits a linear representation in Lie-algebra coordinates. This follows from established results in invariant systems theory and does not rely on local linearization of the nonlinear dynamics~\cite{Barrau2017}.
\subsection{Lie Algebra Operations and Adjoint Maps}
For matrices $A_1, B_1 \in \mathfrak{gl}$, the adjoint operator is defined as 
\begin{equation}
    \text{ad}_{A_1}(B_1) := [A_1,B_1] := A_1B_1 - B_1A_1.
\end{equation}
Note for a matrix Lie group $G$ with Lie algebra $\mathfrak{g}$, the result lies within the algebra if both of the elements are in $\mathfrak{g}$. That is, 
\begin{equation}
    \text{ad}_{A_2}(B_2) \in \mathfrak{g}, \text{for all } A_2,B_2 \in \mathfrak{g}.
\end{equation}
The group adjoint induced by $X \in G$ is
\begin{equation}
    \text{Ad}_X(B_2) := XB_2X^{-1}.
\end{equation}
The corresponding matrix representation acting on vector coordinates is denoted $\text{Ad}_X^\vee$.

\subsection{Problem Statement}
Consider a rigid-body spacecraft with position $p \in \mathbb{R}^3$, and attitude $R \in SO(3)$ expressed in an Earth-centered inertial (ECI) frame, evolving according to 
\begin{align}
    \dot p &= v,  \label{eq:p-v} \\
    \dot v &= Ra + g(p), \\
    \dot R &= R[\omega]_\times, \label{eq:R-omega}
\end{align}
where $a \in \mathbb{R}^3$ is the commanded body-frame acceleration, $\omega \in \mathbb{R}^3$ is the body-frame angular velocity, and $g(p)$ denotes the gravitational acceleration at position $p$.

Let $(\bar p, \bar v, \bar R)$ denote a smooth reference trajectory satisfying
\begin{align}
    \dot{\bar{p}} &= \bar{v}, \label{eq:p-v-ref}\\
    \dot{\bar{v}} &= \bar{R}\bar{a} + g(\bar{p}), \\
    \dot{\bar{R}} &= \bar{R}[\bar{\omega}]_\times. \label{eq:R-omega-ref}
\end{align}
The objective of this paper is to characterize the tracking error between the actual and reference spacecraft states under this model. Section~\ref{sec:error_formulation} derives a mixed-invariant formulation of the spacecraft dynamics on $\mathrm{SE}_2(3)$, extending the approach developed for multirotor systems in \cite{lin2023multirotor} to the spacecraft proximity operations setting. We identify gravity as the sole obstruction to group-affinity and show that the resulting tracking error admits a log-linear representation in Lie-algebra coordinates---either approximately via a bounded gravity mismatch, or exactly under dynamic inversion.

\section{Mixed-Invariant Dynamics and Error Formulation} \label{sec:error_formulation}

By expressing the spacecraft’s position, velocity, and attitude dynamics on the Lie group $\mathrm{SE}_2(3)$, the equations of motion can be written in an approximate mixed-invariant form~\cite{lin2023multirotor}. As a consequence, the tracking-error dynamics admit an approximately linear representation in the associated Lie algebra. The nonlinearity can be either bounded, or removed with the application of an appropriate control law. This linear representation is defined through the logarithmic map and is therefore valid away from the $\pi$-rotation singularity of $\mathrm{SO}(3)$. That is, while a $180^\circ$ rotation may correspond to multiple vectors under the log map, the representation is unique for all smaller rotation angles. Throughout this work we assume that relative attitude errors remain within the injectivity radius of the log map, as is standard in proximity operations and spacecraft tracking; this assumption is valid in practice, as the reference trajectory can be recomputed if errors approaching $180^\circ$ are encountered.

\subsection{Equations of Motion of Spacecraft as Mixed-Invariant System}

Let $X \in \mathrm{SE}_2(3)$ denote the spacecraft state as defined in~\eqref{eq:SE23}, and let $\bar{X} \in \mathrm{SE}_2(3)$ denote the corresponding reference trajectory with components $(\bar{p}, \bar{v}, \bar{R})$.

Equations \eqref{eq:p-v}–\eqref{eq:R-omega} can then be written compactly as
\begin{equation}
\label{eq:mixed-inv}
    \dot{X} = (M - C)X + X(N + C),
\end{equation}

and similarly, equations \eqref{eq:p-v-ref}–\eqref{eq:R-omega-ref} can be expressed as
\begin{equation}
\label{eq:mixed-inv-ref}
    \dot{\bar X} = ({\bar M} - C){\bar X} + {\bar X}({\bar N} + C),
\end{equation}
where
\[
M =
\begin{bmatrix}
\mathbf{0}_{3\times3} & g(p) & \mathbf{0}_{3\times1}\\
\mathbf{0}_{1\times3} & 0 & 0\\
\mathbf{0}_{1\times3} & 0 & 0
\end{bmatrix},
\qquad
N =
\begin{bmatrix}
[\omega]_\times & a & \mathbf{0}_{3\times1}\\
\mathbf{0}_{1\times3} & 0 & 0\\
\mathbf{0}_{1\times3} & 0 & 0
\end{bmatrix},
\]

and 

\[
\bar M =
\begin{bmatrix}
\mathbf{0}_{3\times3} & g(\bar p) & \mathbf{0}_{3\times1}\\
\mathbf{0}_{1\times3} & 0 & 0\\
\mathbf{0}_{1\times3} & 0 & 0
\end{bmatrix},
\qquad
\bar N =
\begin{bmatrix}
[\bar \omega]_\times & \bar a & \mathbf{0}_{3\times1}\\
\mathbf{0}_{1\times3} & 0 & 0\\
\mathbf{0}_{1\times3} & 0 & 0
\end{bmatrix},
\]

\[
C =
\begin{bmatrix}
\mathbf{0}_{3\times3} & \mathbf{0}_{3\times1} & \mathbf{0}_{3\times1}\\
\mathbf{0}_{1\times3} & 0 & 1\\
\mathbf{0}_{1\times3} &  0 & 0
\end{bmatrix}.
\]

Notice that $M, N \in \mathfrak{se}_2(3)$, and that the constant matrix $C$ encodes the kinematic coupling $\dot p = v$ between position and velocity. If it were not for the position-dependent gravity term $g(p)$, the system would be group affine \cite{Barrau2017}. The dependence of $M$ on the absolute position $p$ breaks group-affinity for the $MX$ term. The $XN$ term captures translational acceleration and angular velocity in the body frame.

Let the reference and state mismatch for $M$ and $N$ be
\[
\tilde{M} \coloneq \bar{M} - M,
\qquad
\tilde{N} \coloneq \bar{N} - N,
\]
We denote their corresponding coordinates in $\mathbb{R}^9$ by
\[
     n = N^\vee
     \qquad
    \bar n = \bar N^\vee,
    \qquad
    \tilde m = \tilde M^\vee,
    \qquad
    \tilde n = \tilde N^\vee.
\]

Notice that $n$ and $\bar n$ represent the inputs of the actual and reference trajectories, respectively, including control commands and any body-frame disturbances such as thrust misalignment. The mismatch $\tilde n$ therefore defines the channel through which control action will be inserted via dynamic inversion, while $\tilde m$ represents the residual gravity-induced mismatch between the two trajectories.

\subsection{Other notation}
The left and right inverse Jacobians of $\mathrm{SE}_2(3)$ are given by \cite{hall2015lie, barfoot2024state}
\[
    J_\ell^{-1}(\xi)
    = \frac{\operatorname{ad}_{\xi} e^{-\operatorname{ad}_{\xi}}}
           {I - e^{-\operatorname{ad}_{\xi}}},
    \qquad
    J_r^{-1}(\xi)
    = \frac{\operatorname{ad}_{\xi}}
           {I - e^{-\operatorname{ad}_{\xi}}},
\]
 
where the operators $\operatorname{ad}_{\bar n}$,
$J_\ell(\xi)$, $J_r(\xi)$, and $\operatorname{Ad}_{\bar X^{-1}}^\vee$
are all linear maps from $\mathbb{R}^9$ to $\mathbb{R}^9$ and $I$ is the $9\times 9$ identity matrix. 
The matrix $A_C$ encodes the kinematic coupling $\dot{\xi}_p = \xi_v$ in the log–coordinates and is given by,
\[
A_C =
\begin{bmatrix}
\mathbf{0}_{3\times3} & I_3 & \mathbf{0}_{3\times3}\\
\mathbf{0}_{3\times3} & \mathbf{0}_{3\times3} & \mathbf{0}_{3\times3}\\
\mathbf{0}_{3\times3} & \mathbf{0}_{3\times3} & \mathbf{0}_{3\times3}
\end{bmatrix}.
\]

$B$ denotes the input matrix of the log-linearized error dynamics and
$K$ is a constant state-feedback gain. 

The following result specializes the general log-linear error dynamics for mixed-invariant systems~\cite{lin2023multirotor} to the spacecraft tracking problem, making explicit the role of the gravity mismatch term.

\begin{theorem}[Log-Linear Spacecraft Error Dynamics]
\label{thm:loglinear}
Consider a spacecraft with pose-velocity state $X \in \mathrm{SE}_2(3)$ tracking a reference trajectory $\bar{X}(t)$ under gravitational influence. Define the left-invariant logarithmic error
\[
\xi = \Log(X^{-1} \bar{X})^\vee \in \mathbb{R}^9, \quad 
\xi = \begin{bmatrix} \xi_p \\ \xi_v \\ \xi_R \end{bmatrix},
\]
where $\xi_R \in \mathbb{R}^3$, $\xi_v \in \mathbb{R}^3$, and $\xi_p \in \mathbb{R}^3$ are the attitude, velocity, and position components of the Lie algebra error coordinates. For the left-invariant error, position and velocity differences are resolved in the body frame of the actual spacecraft.

The error dynamics satisfy
\begin{equation}
\label{eq:error-dynamics}
\dot\xi = \bigl(-\mathrm{ad}_{\bar{n}} + A_C\bigr)\xi + J_\ell^{-1}(\xi)\tilde{n} + J_r^{-1}(\xi)\mathrm{Ad}_{\bar{X}^{-1}}^\vee \tilde{m},
\end{equation}
where $\bar{n} \in \mathbb{R}^9$ is the reference body-frame velocity, $J_\ell^{-1}(\xi)$ and $J_r^{-1}(\xi)$ are the left and right inverse Jacobians of $\mathrm{SE}_2(3)$, and the mismatch terms are
\[
\tilde{n} = \begin{bmatrix} \tilde{\omega} \\ \tilde{a} \\ 0 \end{bmatrix} = \begin{bmatrix} \bar{\omega} - \omega \\ \bar{a} - a \\ 0 \end{bmatrix}, \quad
\tilde{m} = \begin{bmatrix} 0 \\ g(\bar{p}) - g(p) \\ 0 \end{bmatrix},
\]
with $\tilde{n}$ the body-frame control mismatch (angular velocity and thrust) and $\tilde{m}$ the world-frame gravitational acceleration mismatch.
\end{theorem}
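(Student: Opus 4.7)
The plan is to differentiate the left-invariant error $\eta = X^{-1}\bar X$ using the mixed-invariant dynamics~\eqref{eq:mixed-inv}--\eqref{eq:mixed-inv-ref}, then pass to log-coordinates through the standard identity $\dot\xi = J_r^{-1}(\xi)\bigl(\eta^{-1}\dot\eta\bigr)^\vee$. Using $\dot{X^{-1}} = -X^{-1}\dot X\,X^{-1}$ and substituting, the $C$ contributions cancel inside the $X$-conjugation and $X^{-1}\bar X = \eta$ lets me rewrite $X^{-1}\tilde M\,\bar X = \eta\,\mathrm{Ad}_{\bar X^{-1}}(\tilde M)$. With $N = \bar N - \tilde N$, the result decomposes as
\[
\dot\eta = \eta\,\mathrm{Ad}_{\bar X^{-1}}(\tilde M) + [\eta,\bar N] + \tilde N\,\eta + [\eta,C],
\]
cleanly separating the gravity mismatch, the reference drift, the control mismatch, and the kinematic coupling. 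Left-multiplying by $\eta^{-1}$, taking the vee, and applying $J_r^{-1}(\xi)$ reduces the theorem to evaluating four terms.

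Three are routine. The gravity term produces $J_r^{-1}(\xi)\,\mathrm{Ad}_{\bar X^{-1}}^\vee\tilde m$ directly. The control-mismatch term becomes $J_r^{-1}(\xi)\,\mathrm{Ad}_{\eta^{-1}}^\vee\tilde n$, which I convert to $J_\ell^{-1}(\xi)\tilde n$ via the identity $J_\ell^{-1}(\xi) = J_r^{-1}(\xi)\,\mathrm{Ad}_{\eta^{-1}}^\vee$ (obtained from $J_\ell = \mathrm{Ad}_\eta^\vee J_r$). For the drift term, $\eta^{-1}[\eta,\bar N] = (I - \mathrm{Ad}_{\eta^{-1}})\bar N$ and $\mathrm{Ad}_{\eta^{-1}} = e^{-\mathrm{ad}_{\xi^\wedge}}$, so the closed form $J_r^{-1}(\xi) = \mathrm{ad}_\xi/(I - e^{-\mathrm{ad}_\xi})$ cancels the series exactly, producing $\mathrm{ad}_\xi\,\bar n = -\mathrm{ad}_{\bar n}\,\xi$ with no residual Jacobian.

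The main obstacle is the $[\eta,C]$ contribution, because $C \notin \mathfrak{se}_2(3)$ and the formal power-series identity for $\mathrm{Ad}_{\eta^{-1}}$ does not literally apply to $C$. I would resolve this by a short block calculation showing $C\,\xi^\wedge = 0$ and $[\xi^\wedge,C] \in \mathfrak{se}_2(3)$ with $\bigl([\xi^\wedge,C]\bigr)^\vee = A_C\,\xi$, from which induction yields $\bigl(\mathrm{ad}_{\xi^\wedge}^k C\bigr)^\vee = \mathrm{ad}_\xi^{\,k-1}(A_C\,\xi)$ for all $k\geq 1$. Summing the resulting series gives
\[
\bigl((I - \mathrm{Ad}_{\eta^{-1}})C\bigr)^\vee = \frac{I - e^{-\mathrm{ad}_\xi}}{\mathrm{ad}_\xi}\,(A_C\,\xi),
\]
and applying $J_r^{-1}(\xi)$ collapses the prefactor exactly to produce $A_C\,\xi$, mirroring the telescoping that produced $-\mathrm{ad}_{\bar n}\,\xi$ from the drift term. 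Combining the four pieces then yields~\eqref{eq:error-dynamics}, presenting the theorem as the specialization of the mixed-invariant log-linear result of~\cite{lin2023multirotor} with the state-dependent gravity carried through as the sole non-group-affine term via $\mathrm{Ad}_{\bar X^{-1}}$.
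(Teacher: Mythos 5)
Your proposal is correct and follows essentially the same route as the paper's proof: differentiate $\eta = X^{-1}\bar X$, obtain the same four-term decomposition (gravity via $\mathrm{Ad}_{\bar X^{-1}}\tilde M$, drift via $(I-\mathrm{Ad}_{\eta^{-1}})\bar N$, control mismatch via $\mathrm{Ad}_{\eta^{-1}}\tilde N$, and the $C$-commutator), then let the closed forms of $J_r^{-1}$ and $J_\ell^{-1}$ telescope each piece. The only difference is cosmetic: where the paper justifies the $C$ term by noting $C$ lies in the normalizer of $\mathfrak{se}_2(3)$ and defers $(\mathrm{ad}_{\xi^\wedge}C)^\vee = A_C\xi$ to the appendix, you make the same point slightly more explicit with the induction $(\mathrm{ad}_{\xi^\wedge}^k C)^\vee = \mathrm{ad}_\xi^{k-1}(A_C\xi)$.
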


\begin{proof}
From the definition of $\eta$, we have
\[
\dot{\eta}
= -X^{-1}\dot X X^{-1}\bar X + X^{-1}\dot{\bar X}.
\]
Substituting the expressions for $\dot X$ and $\dot{\bar X}$ gives
\begin{align*}
\dot{\eta}
&= -X^{-1}\bigl((M-C)X + X(N+C)\bigr)X^{-1}\bar X \\
&\quad + X^{-1}\bigl((\bar M-C)\bar X + \bar X(\bar N+C)\bigr).
\end{align*}

It follows that
\[
\dot \eta
= X^{-1}\tilde M\bar X
+ \eta\bar N - N\eta
+ \eta C - C\eta.
\]

Left–multiplying by $\eta^{-1}$ and expressing $N$ in terms of $\tilde N$ and $\bar N$ yields
\[
\eta^{-1}\dot \eta
= \operatorname{Ad}_{\bar X^{-1}}\tilde M
+ (I-\operatorname{Ad}_{\eta^{-1}})\bar N
+ \operatorname{Ad}_{\eta^{-1}}\tilde N
+ (C-\eta^{-1}C\eta).
\]

Notice that $C-\eta^{-1}C\eta =(I - e^{-\operatorname{ad}_{\xi^\wedge}}) C \in \mathfrak{se}_2(3)$, even though $C \notin \mathfrak{se}_2(3)$, since $C$ lies in the normalizer of \(\mathfrak{se}_2(3)\). Consequently, the $\vee$ operator can be applied. Taking the right-trivialized differential of the log map,
\[
\dot{\xi^\wedge}
= J_r^{-1}(\xi^\wedge)(\eta^{-1}\dot\eta),
\]
and using the identity
\[
\operatorname{Ad}_{\eta^{-1}} = e^{-\operatorname{ad}_{\xi^\wedge}},
\]
it is clear that
\[
\dot{\xi^\wedge}
= J_r^{-1}(\xi^\wedge)\operatorname{Ad}_{\bar X^{-1}}\tilde M
+ \operatorname{ad}_{\xi^\wedge}\bar N
+ J_r^{-1}(\xi^\wedge)\operatorname{Ad}_{\eta^{-1}}\tilde N
+ \operatorname{ad}_{\xi^\wedge}C.
\]

This expression can be rearranged using the property $\operatorname{ad}_x y = - \operatorname{ad}_y x$ to obtain
\[
\dot{\xi^\wedge}
= -\operatorname{ad}_{\bar N}\xi^\wedge + \operatorname{ad}_{\xi^\wedge}C
+ J_l^{-1}(\xi^\wedge)\tilde N
+ J_r^{-1}(\xi^\wedge)\operatorname{Ad}_{\bar X^{-1}}\tilde M.
\]
Finally, taking the $\vee$ operator on both sides and noting that
$(\operatorname{ad}_{\xi^\wedge} C)^\vee = A_C\xi$ (see Appendix \ref{app:ac-proof}) gives ~\eqref{eq:error-dynamics}.
\end{proof}
\begin{proposition}[Gravity Mismatch Bound]
\label{prop:gravity-bound}

Let $\|\cdot\|$ denote the Euclidean norm on $\mathbb{R}^n$, and, when applied to matrices, the induced (spectral) norm. 

Let $r = \|\bar{p}\|$, $d = \|\bar{p} - p\|$ with $d < r$,
and $\theta = \|\xi_R\|$.
Then:

\begin{enumerate}
\item The gravity mismatch affects only the velocity component of the
log-error dynamics:
\begin{equation}\label{eq:grav-velocity-slot}
J_r^{-1}(\xi)\,\mathrm{Ad}_{\bar{X}^{-1}}^\vee\tilde{m}
= \begin{bmatrix} 0 \\[2pt]
    J_r^{-1,\mathrm{SO}(3)}(\xi_R)\,\bar{R}^\top
    \bigl(g(\bar{p})-g(p)\bigr)
    \\[2pt] 0 \end{bmatrix}.
\end{equation}

\item The mismatch is bounded by
\begin{equation}\label{eq:full-grav-bound}
\bigl\|J_r^{-1}(\xi)\,\mathrm{Ad}_{\bar{X}^{-1}}^\vee\tilde{m}\bigr\|
\;\leq\;
\frac{\theta/2}{\sin(\theta/2)}
\;\cdot\;
\frac{\mu\,\|\xi_p\|\,(2r - \|\xi_p\|)}{r^2\,(r - \|\xi_p\|)^2}\,.
\end{equation}
\end{enumerate}
\end{proposition}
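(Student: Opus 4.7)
The plan is to prove the two claims separately. Part~1 is a block-sparsity reduction that collapses the nine-dimensional term to the velocity slot; Part~2 bounds the two scalar factors (the Jacobian norm and the gravity-difference magnitude) independently.

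For Part~1, both $\mathrm{Ad}_{\bar X^{-1}}^\vee$ and $J_r^{-1}(\xi)$ turn out to be block upper triangular with a vanishing coupling between the position and velocity slots. Working in the ordering $(\xi_p,\xi_v,\xi_R)$, the adjoint matrix reads
\[
\mathrm{Ad}_X^\vee = \begin{bmatrix} R & 0 & [p]_\times R\\ 0 & R & [v]_\times R\\ 0 & 0 & R\end{bmatrix},
\]
so that $\mathrm{Ad}_{\bar X^{-1}}^\vee \tilde m = [0;\,\bar R^\top(g(\bar p)-g(p));\,0]$ by direct multiplication. A short Lie-bracket computation places $\mathrm{ad}_\xi$ in the same block upper triangular shape, with $[\xi_R]_\times$ on each diagonal block and $[\xi_p]_\times,[\xi_v]_\times$ appearing only in the attitude column. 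Every power of $\mathrm{ad}_\xi$ inherits this pattern, so the analytic matrix function $J_r^{-1}(\xi) = \mathrm{ad}_\xi/(I-e^{-\mathrm{ad}_\xi})$ has $J_r^{-1,\mathrm{SO}(3)}(\xi_R)$ on each diagonal block and a zero $(1,2)$-block; applying it to a velocity-only vector yields a velocity-only output scaled by $J_r^{-1,\mathrm{SO}(3)}(\xi_R)$, which is \eqref{eq:grav-velocity-slot}.

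For Part~2, submultiplicativity together with $\|\bar R^\top\|=1$ reduces the estimate to bounding $\|J_r^{-1,\mathrm{SO}(3)}(\xi_R)\|$ and $\|g(\bar p)-g(p)\|$ separately. The first matrix is an analytic function of the skew-symmetric (hence normal) matrix $[\xi_R]_\times$, so it is normal and its spectral norm equals its spectral radius; the eigenvalues $0,\pm i\theta$ of $[\xi_R]_\times$ push forward under $z\mapsto z/(1-e^{-z})$ to $1$ and $\pm i\theta/(1-e^{\mp i\theta})$, the latter with modulus $(\theta/2)/\sin(\theta/2)\geq 1$ on $(0,\pi)$. For the gravity factor, combining $g(q)=-\mu q/\|q\|^3$ with the law-of-cosines identity $p\cdot\bar p=(r^2+\|p\|^2-d^2)/2$ reduces $\mu^{-2}\|g(\bar p)-g(p)\|^2$ to a scalar function $h(\|p\|)$ on $\|p\|\in[r-d,r+d]$. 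Direct calculation yields $h(r\mp d)=\bigl(1/(r\mp d)^2-1/r^2\bigr)^2$, and analyzing $h'(s)=s^{-5}[-4+s(s^2+3r^2-3d^2)/r^3]$ shows a unique interior zero lying in $(r,r+d)$ that is a minimum, so the maximum of $h$ on $[r-d,r+d]$ is attained at an endpoint; a short algebraic comparison gives $h(r-d)>h(r+d)$. Hence $\|g(\bar p)-g(p)\|\leq\mu d(2r-d)/(r^2(r-d)^2)$.

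To recast the bound in $\|\xi_p\|$, I would use $\eta=X^{-1}\bar X$ together with the $\mathrm{SE}_2(3)$ exponential to extract $\bar p-p=R\,J_\ell^{\mathrm{SO}(3)}(\xi_R)\,\xi_p$; since the normal matrix $J_\ell^{\mathrm{SO}(3)}(\xi_R)$ has eigenvalue modulus $\sin(\theta/2)/(\theta/2)\leq 1$, we have $d\leq\|\xi_p\|$, and the scalar map $d\mapsto d(2r-d)/(r-d)^2$ has derivative $2r^2/(r-d)^3>0$ on $(0,r)$, so substituting $\|\xi_p\|$ preserves the direction of inequality and yields \eqref{eq:full-grav-bound}. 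The main obstacle is the tight gravity estimate: a Hessian-based mean-value argument gives only the looser bound $2\mu d/(r-d)^3$, so recovering the claimed constant $\mu(1/(r-d)^2-1/r^2)$ really does require the scalar reduction and verifying that the radial-inward configuration $\|p\|=r-d$ is the worst case on the sphere $\{\|p-\bar p\|=d\}$.
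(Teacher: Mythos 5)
Your proposal is correct, and its overall scaffolding (block-sparsity for Part~1, then separate bounds on the Jacobian norm and the gravity difference, then the substitution $d\le\|\xi_p\|$ with monotonicity in $d$) matches the paper. Part~1 is essentially identical, though you are more explicit than the paper about the one fact that actually matters — that the $(1,2)$-block of $\mathrm{ad}_\xi$, and hence of every power and of $J_r^{-1}(\xi)$, vanishes, so a velocity-only input cannot leak into the position slot. For the $\mathrm{SO}(3)$ Jacobian norm you derive $\|J_r^{-1,\mathrm{SO}(3)}(\xi_R)\|=(\theta/2)/\sin(\theta/2)$ from normality and the spectral mapping of $z\mapsto z/(1-e^{-z})$, where the paper simply cites Barfoot; same result. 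The genuine divergence is the gravity-difference estimate: the paper applies the fundamental theorem of calculus along the chord from $\bar p$ to $p$, bounds the gravity-gradient spectral norm by $2\mu/\|q\|^3$, and lower-bounds $\|\bar p+s\rho\|$ by $r-sd$ to get $2\mu d\int_0^1(r-sd)^{-3}\,ds=\mu d(2r-d)/(r^2(r-d)^2)$; you instead maximize $\|g(\bar p)-g(p)\|$ exactly over the sphere $\{\|p-\bar p\|=d\}$ via the law-of-cosines scalar reduction and an endpoint analysis of $h$. The two routes land on the identical closed form, but yours additionally shows the intermediate bound is sharp (attained at the radially inward configuration $\|p\|=r-d$), at the cost of a longer single-variable calculus argument; the paper's integral argument is shorter and would adapt more readily to other potentials. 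One cosmetic imprecision: $J_\ell^{\mathrm{SO}(3)}(\xi_R)$ also has the eigenvalue $1$ (so its spectral norm is exactly $1$, not $\sin(\theta/2)/(\theta/2)$), but the conclusion $d\le\|\xi_p\|$ you draw from it is unaffected.
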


\begin{proof}
\emph{Part~(i).}
The matrix $\mathrm{ad}_\xi$ on $\mathfrak{se}_2(3)$ is block
upper-triangular with identical diagonal blocks $[\xi_R]_\times$.
Powers of $\mathrm{ad}_\xi$ preserve this structure, so the Bernoulli
series $J_r^{-1}(\xi) = \sum_{k=0}^{\infty} \frac{B_k^+}{k!}\,
\mathrm{ad}_\xi^k$ is also block upper-triangular with diagonal blocks
$J_r^{-1,\mathrm{SO}(3)}(\xi_R)$.
Since $\mathrm{Ad}_{\bar{X}^{-1}}^\vee\tilde{m}
= [0,\; \bar{R}^\top(g(\bar{p})-g(p)),\; 0]^\top$
only has a velocity component,
multiplication extracts only the $(2,2)$-block,
yielding~\eqref{eq:grav-velocity-slot}.

\medskip
\emph{Part~(ii).}
By~\eqref{eq:grav-velocity-slot} and orthogonality of $\bar{R}$,
\[
\bigl\|J_r^{-1}(\xi)\,\mathrm{Ad}_{\bar{X}^{-1}}^\vee\tilde{m}\bigr\|
\;\leq\;
\bigl\|J_r^{-1,\mathrm{SO}(3)}(\xi_R)\bigr\|\;\cdot\;
\bigl\|g(\bar{p}) - g(p)\bigr\|.
\]
The spectral-norm bound on the inverse right jacobian on $\mathrm{SO}(3)$~\cite{barfoot2024state} gives
$\|J_r^{-1,\mathrm{SO}(3)}(\xi_R)\| \leq
\tfrac{\theta/2}{\sin(\theta/2)}$ for $\theta < \pi$.

For the gravity difference, the Jacobian of
$g(r) = -\mu\,r/\|r\|^3$ satisfies
$\|J(r)\| = 2\mu/\|r\|^3$.
Setting $\rho = p - \bar{p}$, the fundamental theorem of calculus and
the reverse triangle inequality
$\|\bar{p} + s\rho\| \geq r - sd$ give
\[
\|g(\bar{p}) - g(p)\|
\;\leq\;
2\mu\,d \int_0^1 \frac{\mathrm{d}s}{(r - sd)^3}
\;=\;
\frac{\mu\,d\,(2r-d)}{r^2(r-d)^2}\,.
\]
The $(1,3)$-block of $\eta = X^{-1}\bar{X} = \mathrm{Exp}(\xi^\wedge)$
yields $\bar{p} - p = R\,J_\ell^{\mathrm{SO}(3)}(\xi_R)\,\xi_p$, and
since $\|J_\ell^{\mathrm{SO}(3)}(\xi_R)\| = 1$,
we have $d \leq \|\xi_p\|$.
The function $f(d) = d(2r-d)/[r^2(r-d)^2]$ is increasing on $[0,r)$,
so replacing $d$ by $\|\xi_p\|$ yields~\eqref{eq:full-grav-bound}.
\end{proof}

\begin{theorem}[Stabilized Log-Linear Error Dynamics]
\label{thm:stabilized}
Under the feedback control law
\begin{equation}\label{eq:u1}
u_1
= -\,\mathrm{Ad}_{\eta\bar{X}^{-1}}^\vee
    \bigl(g(p) - g(\bar{p})\bigr),
\end{equation}
the gravity mismatch is eliminated entirely and the error dynamics
become exactly log-linear:
\begin{equation}\label{eq:exact-loglinear}
\dot{\xi} = A(t)\,\xi,
\qquad
A(t) = -\mathrm{ad}_{\bar{n}(t)} + A_C\,.
\end{equation}
If the reference body-frame velocity $\bar{n}$ is constant, the system
is linear time-invariant.

Furthermore, combining $u_1$ with the stabilizing feedback
of~\cite{lin2023loglinear},
\begin{equation}\label{eq:u2}
u_2 = J_\ell(\xi)^{-1}\bigl(BK\xi\bigr),
\end{equation}
and setting $\tilde{n} = u_1 + u_2$, the closed-loop error dynamics are
\begin{equation}\label{eq:closed-loop-stabilized}
\dot{\xi} = \bigl(A(t) + BK\bigr)\,\xi\,.
\end{equation}
\end{theorem}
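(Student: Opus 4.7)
The plan is to substitute the proposed control law directly into the log-linear error equation~\eqref{eq:error-dynamics} from Theorem~\ref{thm:loglinear} and let the two Jacobians collapse against one another via the standard Lie-group identity $J_\ell(\xi) = \mathrm{Ad}_{\Exp(\xi^\wedge)} J_r(\xi)$. Since $\eta = \Exp(\xi^\wedge)$, this identity rearranges to $J_\ell(\xi)\,J_r^{-1}(\xi) = \mathrm{Ad}_\eta^\vee$, which is the key algebraic tool for both parts. No new dynamics are introduced; the work is to show that the two nonlinear Jacobian-weighted terms either cancel or reduce to a clean linear expression.

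For the first claim, I would set $\tilde{n} = u_1$ in~\eqref{eq:error-dynamics} and demand that the two rightmost terms vanish, i.e.\ $J_\ell^{-1}(\xi)\,u_1 + J_r^{-1}(\xi)\,\mathrm{Ad}_{\bar{X}^{-1}}^\vee \tilde{m} = 0$. Solving for $u_1$ gives $u_1 = -J_\ell(\xi)\,J_r^{-1}(\xi)\,\mathrm{Ad}_{\bar{X}^{-1}}^\vee \tilde{m}$, and invoking the Jacobian identity together with the composition rule $\mathrm{Ad}_\eta^\vee \mathrm{Ad}_{\bar{X}^{-1}}^\vee = \mathrm{Ad}_{\eta \bar{X}^{-1}}^\vee$ collapses the expression into the stated $u_1 = -\mathrm{Ad}_{\eta \bar{X}^{-1}}^\vee(g(p)-g(\bar{p}))$ (interpreting the gravity difference as living in the velocity slot of $\tilde m$). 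After this cancellation, only $(-\mathrm{ad}_{\bar{n}} + A_C)\xi$ survives, which is exactly $\dot\xi = A(t)\xi$. The LTI statement is then immediate, since the sole time-dependence in $A(t)$ enters through $\bar{n}(t)$.

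For the second claim, I would substitute $\tilde{n} = u_1 + u_2$ into~\eqref{eq:error-dynamics}. The $u_1$ contribution kills the gravity mismatch exactly as before, while $u_2$ is engineered so that $J_\ell^{-1}(\xi)\,u_2 = J_\ell^{-1}(\xi)\,J_\ell(\xi)\,BK\xi = BK\xi$, which pulls the feedback cleanly inside the linear error dynamics. Summing the surviving terms yields the stated closed-loop form $\dot\xi = (A(t)+BK)\xi$. This part is essentially bookkeeping once Part~1 is in place.

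The only real obstacle is establishing, and applying correctly, the identity $J_\ell(\xi)\,J_r^{-1}(\xi) = \mathrm{Ad}_\eta^\vee$ on the full nine-dimensional algebra coordinates of $\mathfrak{se}_2(3)$. This relation is standard for matrix Lie groups (a reference such as \cite{barfoot2024state} suffices), but it deserves an explicit mention because without it the chain $J_\ell\,J_r^{-1}\,\mathrm{Ad}_{\bar X^{-1}}^\vee$ does not visibly simplify to the single adjoint $\mathrm{Ad}_{\eta \bar X^{-1}}^\vee$ appearing in~\eqref{eq:u1}. Everything else in the proof is substitution, sign-tracking, and recognizing that $u_2$ was deliberately prewhitened by $J_\ell(\xi)^{-1}$ so that the feedback passes through the left-Jacobian unchanged.
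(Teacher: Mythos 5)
Your proof is correct and follows essentially the same route as the paper: substitute $\tilde n = u_1 + u_2$ into \eqref{eq:error-dynamics}, collapse $J_\ell(\xi)J_r^{-1}(\xi)$ to $\mathrm{Ad}_{\eta}^{\vee}$ so that the $u_1$ term cancels the gravity-mismatch term, and observe that $u_2$ contributes $J_\ell^{-1}(\xi)u_2 = BK\xi$. The only discrepancy is a sign-bookkeeping issue you share with the paper itself: since $\tilde m$ carries $g(\bar p)-g(p)$ in its velocity slot, $-\mathrm{Ad}_{\eta\bar X^{-1}}^{\vee}\tilde m$ equals $+\mathrm{Ad}_{\eta\bar X^{-1}}^{\vee}\bigl(g(p)-g(\bar p)\bigr)$, so the final equality in your Part~1 (and the corresponding step in the paper's proof, which also writes $J_\ell^{-1}J_r$ where the identity requires $J_\ell J_r^{-1}$) silently absorbs a sign flip that should be made explicit.
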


\begin{proof}
Adding $u_1$ to the reference controls sets the control mismatch to
$\tilde{n} = u_1$. We can write
\[
u_1
= -\,J_\ell(\xi)^{-1}\,J_r(\xi)\,
    \mathrm{Ad}_{\bar{X}^{-1}}^\vee\bigl(g(p) - g(\bar{p})\bigr),
\]
using
$\mathrm{Ad}_{\eta\bar{X}^{-1}}^\vee
= J_\ell(\xi)^{-1}\,J_r(\xi)\,\mathrm{Ad}_{\bar{X}^{-1}}^\vee$.
Substituting into the $J_\ell^{-1}(\xi)\,\tilde{n}$ term
of~\eqref{eq:error-dynamics}:
\[
J_\ell^{-1}(\xi)\,u_1
= -\,J_r^{-1}(\xi)\,\mathrm{Ad}_{\bar{X}^{-1}}^\vee
    \bigl(g(p) - g(\bar{p})\bigr)
= -\,J_r^{-1}(\xi)\,\mathrm{Ad}_{\bar{X}^{-1}}^\vee\tilde{m}\,,
\]
which exactly cancels the gravity mismatch term
in~\eqref{eq:error-dynamics},
yielding~\eqref{eq:exact-loglinear}.

The stabilizing term $u_2$ contributes
$J_\ell^{-1}(\xi)\,u_2 = BK\xi$.
Setting $\tilde{n} = u_1 + u_2$ and combining
yields~\eqref{eq:closed-loop-stabilized}.
\end{proof}

Notice $u_1$ compensates for the non-group-affine nature of the
underlying dynamics.
Under this compensation, the spacecraft admits an exact log-linear
error representation on $\mathrm{SE}_2(3)$, independent of local
linearizations.

\begin{remark}[Structure of the Log-Linearized Dynamics]
\label{rem:explicit-structure}
Under dynamic inversion, the log-linearized error dynamics have the
explicit form:
\begin{align*}
\dot{\xi}_p &= -[\bar{\omega}]_\times\,\xi_p + \xi_v, \\
\dot{\xi}_v &= -[\bar{\omega}]_\times\,\xi_v
               - [\bar{a}]_\times\,\xi_R, \\
\dot{\xi}_R &= -[\bar{\omega}]_\times\,\xi_R,
\end{align*}
where $\bar{n} = [\mathbf{0},\,\bar{a},\,\bar{\omega}]^\top$.
The $-\mathrm{ad}_{\bar{n}}\,\xi$ term generates the frame-rotation
cross-products $-[\bar{\omega}]_\times$ on each component, and the
$A_C$ matrix contributes only the kinematic coupling
$\dot{\xi}_p = \xi_v$.
\end{remark}

\section{Simulation Results}

This section validates the log-error dynamics~\eqref{eq:error-dynamics} and the gravity mismatch bound from Proposition~1 using a highly elliptical orbit, which presents a more demanding test case than circular orbits due to the large variation in orbital radius and gravity gradient.

\subsection{Simulation Setup}

We consider a chief-deputy formation in a Molniya-like orbit~\cite{vallado2013fundamentals} with semi-major axis $a = 26{,}521$ km, eccentricity $e = 0.74$, perigee altitude 500 km, and apogee altitude 39{,}800 km. The orbital period is 11.94 hours and the simulation spans two complete orbits.

The chief spacecraft executes periodic sinusoidal thrust maneuvers with maximum acceleration 0.002 m/s$^2$ while the deputy coasts. Both spacecraft maintain identical angular velocities $\bar{\omega} = \omega = [0.0002, 0.0001, 0.0001]^\top$ rad/s, creating translational control mismatch $\tilde{n}_a = \bar{a} - a \neq 0$ while keeping rotation error bounded. Initial offsets are 219 m in position, 0.22 m/s in velocity, and 0.05 rad (2.9$^\circ$) in attitude.

\subsection{Log-Error Dynamics Validation}

Figure~\ref{fig:log_error_elliptical} compares the log-error $\xi(t)$ computed via two independent methods: (i) propagating both spacecraft using classical Newtonian dynamics and computing $\xi_{\text{classical}}(t) = \Log(X(t)^{-1}\bar{X}(t))^\vee$ at each timestep, and (ii) propagating the error state directly using~\eqref{eq:error-dynamics}. Over two orbits, the position error grows from 219 m to 2{,}407 km and velocity error reaches 1.97 km/s, while attitude error remains constant at 0.05 rad since both spacecraft execute identical rotations. Despite these large separations and the 6.7:1 velocity ratio between perigee and apogee, the two methods agree to within a relative error of $1.6 \times 10^{-8}\%$.

\begin{figure}[t]
    \centering
    \includegraphics[width=\columnwidth]{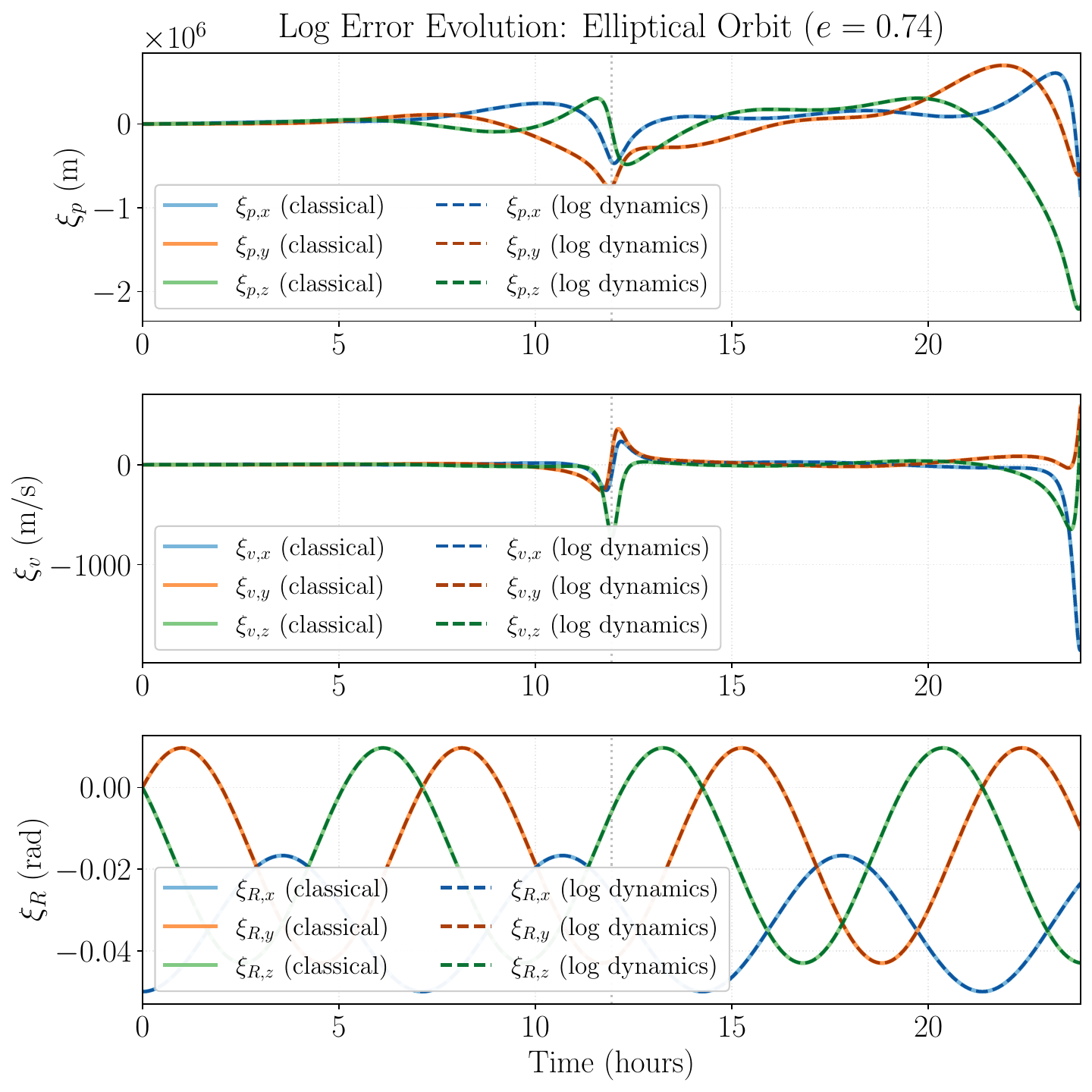}
    \caption{Comparison of log-error evolution computed via classical dynamics (solid lines) and the log-error dynamics~\eqref{eq:error-dynamics} (dashed lines). The three panels show the position error $\xi_p$, velocity error $\xi_v$, and attitude error $\xi_R$ components over two orbits of a highly elliptical orbit ($e = 0.74$). Despite errors growing to over 2{,}400 km in position and 2 km/s in velocity, the two methods remain visually indistinguishable.}
    \label{fig:log_error_elliptical}
\end{figure}

Figure~\ref{fig:log_error_residual} shows the componentwise residual $\Delta\xi(t) = \xi_{\text{classical}}(t) - \xi_{\text{log}}(t)$. The position and velocity residuals remain below $4 \times 10^{-4}$ m and $4 \times 10^{-7}$ m/s respectively, while the attitude residual stays at machine precision ($\sim\!10^{-13}$ rad), confirming that the discrepancy is attributable to numerical integration tolerances rather than modeling error.

\begin{figure}[t]
    \centering
    \includegraphics[width=\columnwidth]{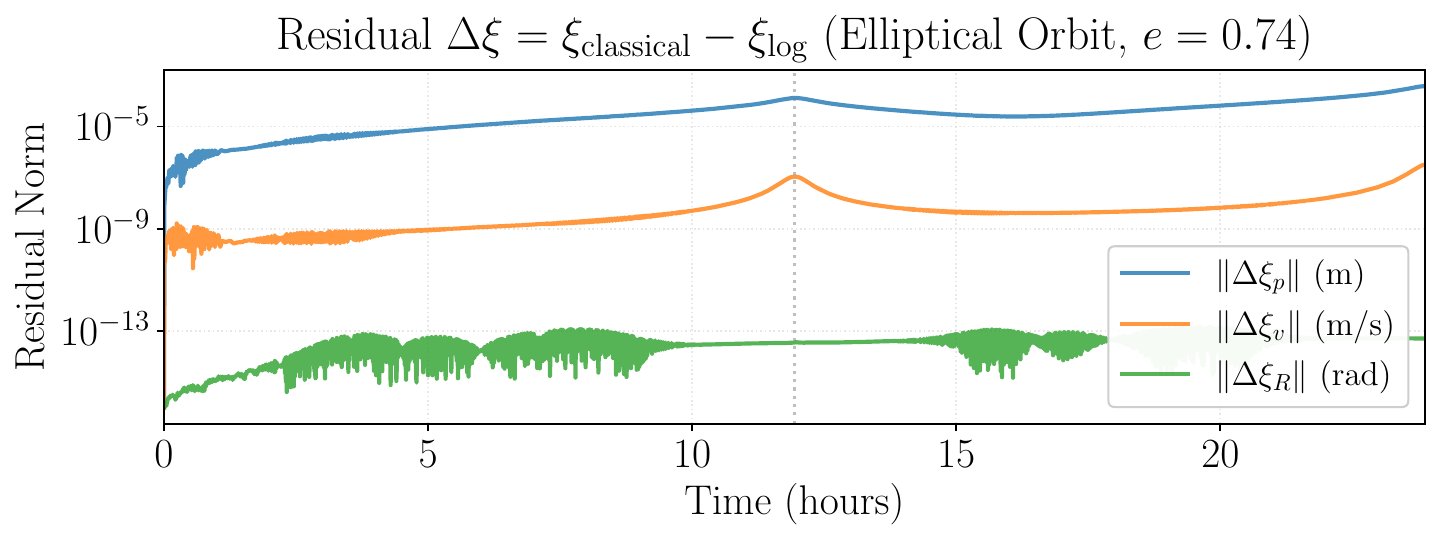}
    \caption{Residual $\Delta\xi = \xi_{\text{classical}} - \xi_{\text{log}}$ between the log-error computed from classical Newtonian propagation and the log-error dynamics~\eqref{eq:error-dynamics} over two elliptical orbits ($e = 0.74$). All components remain at numerical integration precision.}
    \label{fig:log_error_residual}
\end{figure}

\subsection{Gravity Mismatch Bound Verification}

Figure~\ref{fig:gravity_ratio_elliptical} validates the gravity mismatch bound from Proposition~1. At each timestep, we evaluate the actual gravity mismatch term $\|J_r^{-1}(\xi) \mathrm{Ad}^\vee_{\bar{X}^{-1}} \tilde{m}\|$ and compare it against the pointwise bound~\eqref{eq:full-grav-bound}. The ratio of the actual mismatch to the bound remains below unity throughout the simulation. The actual gravity mismatch reaches a maximum of 2.85 m/s$^2$, while the pointwise bound reaches $10.1$ m/s$^2$.

A global constant bound valid over the entire trajectory can be obtained by evaluating~\eqref{eq:full-grav-bound} at its worst-case values: minimum orbital radius $r_{\min,\text{global}} = 6{,}871$ km (perigee), maximum position error $\|\xi_p\|_{\max} = 2{,}407$ km, and maximum attitude error $\|\xi_R\|_{\max} = 0.05$ rad. This yields a global bound of $11.6$ m/s$^2$, giving a ratio of the maximum actual mismatch to the global bound of 0.25. The global bound requires only worst-case orbital and error parameters, making it suitable for mission planning without requiring knowledge of both trajectories.

\begin{figure}[t]
    \centering
    \includegraphics[width=\columnwidth]{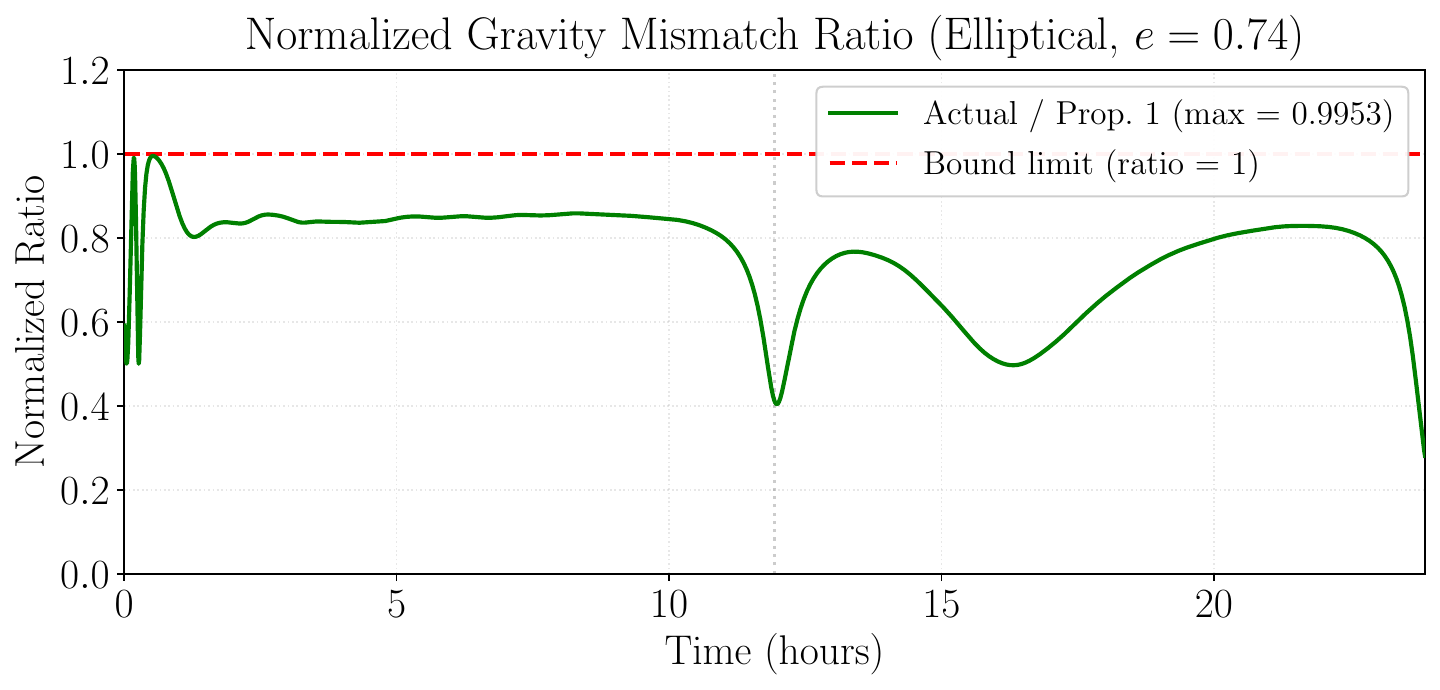}
    \caption{Normalized gravity mismatch ratio for a highly elliptical orbit ($e = 0.74$). The ratio of the actual gravity mismatch term $\|J_r^{-1}(\xi) \mathrm{Ad}^\vee_{\bar{X}^{-1}} \tilde{m}\|$ to the pointwise Proposition~1 bound~\eqref{eq:full-grav-bound} remains below unity throughout two orbits. Though the bound itself peaks near the perigee passages where the gravity gradient is strongest, the plot shows the bound is more conservative at these points near Earth.}
    \label{fig:gravity_ratio_elliptical}
\end{figure}

These results confirm that the log-error dynamics~\eqref{eq:error-dynamics} accurately characterize the relative pose and twist evolution on $\mathrm{SE}_2(3)$ even for highly elliptical orbits with large control mismatch, and that the gravity mismatch bound from Proposition~1 remains valid despite the large variation in orbital radius throughout the mission.

\section{Conclusion}

This technical note establishes that the dynamics of a thrusting spacecraft can be embedded in the Lie group $\mathrm{SE}_2(3)$ in an approximate mixed-invariant/group-affine form, with gravity as the sole obstruction. The key result is a bound on the gravity-induced nonlinearity showing that the gravity mismatch acts as a linear perturbation proportional to position error. This bound enables the use of standard linear stability and reachability tools without requiring gravity compensation. For applications where exact linearity is required, a dynamic inversion control law eliminates the gravity mismatch entirely, yielding exactly log-linear closed-loop error dynamics.

Numerical validation in a highly elliptical orbit confirms that the log-error dynamics exactly reproduce the true relative motion to within numerical integration tolerances, and that the gravity mismatch remains within the derived bound throughout a full orbital period.

Future work will develop linear matrix inequality based reachable set bounds using the gravity perturbation structure, incorporate additional disturbances, and recover classical HCW dynamics through appropriate frame changes and gravity linearization, enabling impulsive $\Delta v$ planning about thrusting reference trajectories within a unified geometric framework. We will also investigate the isolation of the gravity gradient term in TH/YA and how it can be represented as a time varying linear term in our framework. This should yield an overall tighter bound, yet remain tractable for linear matrix inequality, bounded-input, bounded-output type analysis for proving safety of rendezvous missions.

\section*{Appendix}
\subsection{Proof of $(\operatorname{ad}_{\xi^\wedge} C)^\vee = A_C\,\xi$}
\label{app:ac-proof}

We explicitly compute
\begin{equation}
\label{eq:xiC}
\xi^\wedge C
=
\begin{bmatrix}
\mathbf{0}_{3\times3} & \mathbf{0}_{3\times1} & \xi_v\\
0 & 0 & 0\\
0 & 0 & 0
\end{bmatrix}.
\end{equation}
Moreover, $C\xi^\wedge=0$ because the only potentially nonzero row of $C$ below the
top block selects the (zero) last row(s) of $\xi^\wedge$. Hence
\begin{equation}
\label{eq:adxiC}
\operatorname{ad}_{\xi^\wedge}(C)
=
[\xi^\wedge,C]
=
\xi^\wedge C - C\xi^\wedge
=
\xi^\wedge C.
\end{equation}

Applying the vee map we see
\begin{equation}
\label{eq:vee-result}
\bigl(\operatorname{ad}_{\xi^\wedge} C\bigr)^\vee
=
\begin{bmatrix}\xi_v\\ 0\\ 0\end{bmatrix}.
\end{equation}

Define the constant matrix $A_C\in\mathbb{R}^{9\times 9}$ (in the $(p,v,R)$ ordering)
by the block form
\begin{equation}
\label{eq:Ac-def}
A_C
\;\coloneqq\;
\begin{bmatrix}
\mathbf{0}_{3\times3} & I_3 & \mathbf{0}_{3\times3}\\
\mathbf{0}_{3\times3} & \mathbf{0}_{3\times3} & \mathbf{0}_{3\times3}\\
\mathbf{0}_{3\times3} & \mathbf{0}_{3\times3} & \mathbf{0}_{3\times3}
\end{bmatrix}.
\end{equation}
Then for $\xi=\bigl[\xi_p^\top,\xi_v^\top,\xi_R^\top\bigr]^\top$,
\[
A_C\,\xi
=
\begin{bmatrix}\xi_v\\0\\0\end{bmatrix},
\]
which matches \eqref{eq:vee-result}. Hence
\[
\boxed{\;\bigl(\operatorname{ad}_{\xi^\wedge} C\bigr)^\vee = A_C\,\xi.\;}
\]

\paragraph*{Interpretation.}
The map $A_C$ extracts the velocity component $\xi_v$ and injects it into the
position component, encoding the kinematic coupling $\dot{\xi}_p=\xi_v$ in the
log-coordinates.

\section*{Acknowledgements}

This material is based on research sponsored by DARPA under agreement number FA8750-24-2-0500. The U.S. Government is authorized to reproduce and distribute reprints for Governmental purposes notwithstanding any copyright notation thereon.

Generative AI (Claude, ChatGPT) was used for exploring mathematical concepts, code drafting, and editing this paper. All theoretical results have been verified by the authors.

\bibliographystyle{new-aiaa}
\bibliography{references}

\begin{thebibliography}{14}
\newcommand{\enquote}[1]{``#1''}
\providecommand{\natexlab}[1]{#1}
\providecommand{\url}[1]{\texttt{#1}}
\providecommand{\urlprefix}{URL }
\expandafter\ifx\csname urlstyle\endcsname\relax
  \providecommand{\doi}[1]{\discretionary{}{}{}https://doi.org/#1}\else
  \providecommand{\doi}[1]{\discretionary{}{}{}\urlstyle{rm}\url{https://doi.org/#1}}\fi

\bibitem[{{European Space Agency}(2025)}]{esa2025}
{European Space Agency}, \enquote{ESA Space Environment Report 2025,} Tech.
  rep., European Space Agency, 2025.
\newblock
  \urlprefix\url{https://www.esa.int/Space_Safety/Space_Debris/ESA_Space_Environment_Report_2025}.

\bibitem[{{NASA Orbital Debris Program Office}(2025)}]{NASA_ODQN_29_2_2025}
{NASA Orbital Debris Program Office}, \enquote{Orbital Debris Quarterly News,
  Volume 29, Issue 3, September 2025,} Tech. Rep. 29-3, NASA Orbital Debris
  Program Office, 2025.
\newblock \urlprefix\url{https://orbitaldebris.jsc.nasa.gov/quarterly-news/}.

\bibitem[{Clohessy and Wiltshire(1960)}]{ClohessyWiltshire1960}
Clohessy, W.~H., and Wiltshire, R.~S., \enquote{Terminal Guidance System for
  Satellite Rendezvous,} \emph{Journal of the Aerospace Sciences}, Vol.~27,
  No.~9, 1960, pp. 653--658.

\bibitem[{Yamanaka and Ankersen(2002)}]{Yamanaka2002}
Yamanaka, K., and Ankersen, F., \enquote{New State Transition Matrix for
  Relative Motion on an Arbitrary Elliptical Orbit,} \emph{Journal of Guidance,
  Control, and Dynamics}, Vol.~25, No.~1, 2002, pp. 60--66.

\bibitem[{Sullivan et~al.(2017)Sullivan, Grimberg, and
  D'Amico}]{SullivanGrimbergDamico2017}
Sullivan, J., Grimberg, S., and D'Amico, S., \enquote{Comprehensive Survey and
  Assessment of Spacecraft Relative Motion Dynamics Models,} \emph{Journal of
  Guidance, Control, and Dynamics}, Vol.~40, No.~8, 2017, pp. 1837--1859.

\bibitem[{Alfriend et~al.(2009)Alfriend, Etter, Stewart, Vadali, and
  Bracetty}]{Alfriend2009}
Alfriend, K.~T., Etter, D.~M., Stewart, K.~P., Vadali, S.~R., and Bracetty, D.,
  \emph{Spacecraft Formation Flying: Dynamics, Control, and Navigation},
  Elsevier, 2009.

\bibitem[{Pavanello et~al.(2025)Pavanello, De~Maria, De~Vittori, Maestrini,
  Di~Lizia, and Armellin}]{Pavanello2025}
Pavanello, Z., De~Maria, L., De~Vittori, A., Maestrini, M., Di~Lizia, P., and
  Armellin, R., \enquote{CAMmary: A Review of Spacecraft Collision Avoidance
  Manoeuvre Design Methods,} \emph{Acta Astronautica}, Vol. 236, 2025, pp.
  770--789.
\newblock \doi{10.1016/j.actaastro.2025.07.027}.

\bibitem[{Barrau and Bonnabel(2015)}]{Barrau2015}
Barrau, A., and Bonnabel, S., \enquote{Intrinsic Filtering on Lie Groups with
  Applications to Attitude Estimation,} \emph{IEEE Transactions on Automatic
  Control}, Vol.~60, No.~2, 2015, pp. 436--449.
\newblock \doi{10.1109/TAC.2014.2364739}.

\bibitem[{Barrau and Bonnabel(2017)}]{Barrau2017}
Barrau, A., and Bonnabel, S., \enquote{The Invariant Extended Kalman Filter as
  a Stable Observer,} \emph{IEEE Transactions on Automatic Control}, Vol.~62,
  No.~4, 2017, pp. 1797--1812.
\newblock \doi{10.1109/TAC.2016.2594085}.

\bibitem[{Lin et~al.(2023)Lin, Goppert, and Hwang}]{lin2023multirotor}
Lin, L.-Y., Goppert, J., and Hwang, I., \enquote{Application of Log-Linear
  Dynamic Inversion Control to a Multi-rotor,} \emph{Workshop on Geometric
  Representations, IEEE International Conference on Robotics and Automation
  (ICRA)}, 2023, pp. 1--5.

\bibitem[{Hall(2015)}]{hall2015lie}
Hall, B.~C., \emph{Lie Groups, Lie Algebras, and Representations: An Elementary
  Introduction}, 2\textsuperscript{nd} ed., No. 222 in Graduate Texts in
  Mathematics, Springer, Cham, 2015.

\bibitem[{Barfoot(2024)}]{barfoot2024state}
Barfoot, T.~D., \emph{State Estimation for Robotics}, 2\textsuperscript{nd}
  ed., Cambridge University Press, Cambridge, 2024.

\bibitem[{Lin et~al.(2024)Lin, Goppert, and Hwang}]{lin2023loglinear}
Lin, L.-Y., Goppert, J., and Hwang, I., \enquote{Log-Linear Dynamic Inversion
  Control With Provable Safety Guarantees in Lie Groups,} \emph{IEEE
  Transactions on Automatic Control}, Vol.~69, No.~8, 2024, pp. 5591--5597.
\newblock \doi{10.1109/TAC.2024.3369549}.

\bibitem[{Vallado(2013)}]{vallado2013fundamentals}
Vallado, D.~A., \emph{Fundamentals of Astrodynamics and Applications},
  4\textsuperscript{th} ed., Microcosm Press, Hawthorne, CA, 2013.

\end{thebibliography}

\end{document}